\theoremstyle{thmstyleone}%
\newtheorem{theorem}{Theorem}
\newtheorem{proposition}[theorem]{Proposition}%
\theoremstyle{thmstyletwo}%
\newtheorem{remark}{Remark}%
\theoremstyle{thmstylethree}%
\begin{document}

\title[Parameter Estimation of the Network of FitzHugh-Nagumo Neurons Based on the Speed-Gradient and Filtering]{Parameter Estimation of the Network of FitzHugh-Nagumo Neurons Based on the Speed-Gradient and Filtering}


\author*[1,2]{\fnm{Aleksandra} \sur{Rybalko}}\email{alexandrarybalko21@gmail.com}

\author[1,2]{\fnm{Alexander} \sur{Fradkov}}\nomail

\affil[1]{\orgdiv{Institute for Problems in Mechanical Engineering}, \orgname{RAS}, \orgaddress{\street{Bolshoy prospekt V.O., 61}, \city{Saint Petersburg}, \postcode{199178}, \country{Russia}}}

\affil[2]{\orgdiv{Theoretical Cybernetics Department}, \orgname{SPBU}, \orgaddress{\street{7-9 Universitetskaya Embankment}, \city{Saint Petersburg}, \postcode{199034}, \country{Russia}}}

\abstract{The paper addresses the problem of parameter estimation (or identification) in dynamical networks composed of an arbitrary number of FitzHugh–Nagumo neuron models with diffusive couplings between each other. It is assumed that only the membrane potential of each model is measured, while the other state variable and all derivatives remain unmeasured. Additionally, potential measurement errors in the membrane potential due to sensor imprecision are considered. To solve this problem, firstly, the original FitzHugh–Nagumo network is transformed into a linear regression model, where the regressors are obtained by applying a filter-differentiator to specific combinations of the measured variables. Secondly, the speed-gradient method is applied to this linear model, leading to the design of an identification algorithm for the FitzHugh–Nagumo neural network. Sufficient conditions for the asymptotic convergence of the parameter estimates to their true values are derived for the proposed algorithm. Parameter estimation for a network of five interconnected neurons is demonstrated through computer simulation. The results confirm that the sufficient conditions are satisfied in the numerical experiments conducted. Furthermore, the algorithm's capabilities for adjusting the identification accuracy and time are investigated. The proposed approach has potential applications in nervous system modeling, particularly in the context of human brain modeling. For instance, EEG signals could serve as the measured variables of the network, enabling the integration of mathematical neural models with empirical data collected by neurophysiologists.}

\keywords{FitzHugh-Nagumo model, identification, neural network models, cybernetics}

\maketitle

\section{Introduction}\label{sec1}

One of the main fields in modern neuroscience is mathematical and computer modeling of neural processes taking place in the nerve system, in particular, in the human brain. However, the widespread application of such modeling to real experimental data can be hindered by the complexity of certain detailed models and the presence of undefined model parameters. The former issue can be addressed by employing simplified models under pragmatic assumptions, such as the FitzHugh–Nagumo (FHN) model \citep{vittadello}. The latter issue arises because these parameters depend not only on the neural phenomenon under consideration but also on its realization in a specific organism (or a group of organisms) or over time. Consequently, the parameters cannot be fixed, necessitating the development of mathematical model parameter estimation (or identification) algorithms.

A major advantage of such mathematically grounded algorithms is their ability to guarantee accuracy under certain defined conditions. These guarantees, along with the corresponding conditions, can only be derived by mathematics, highlighting the utility and necessity of mathematical methods in this context.

This paper addresses the identification problem for a network composed of FHN models \citep{fitzhugh, nagumo}. The FHN model is traditionally regarded as a neuron model. However, it reflects the fundamental properties of excitability in general, making it applicable not only to modeling nerve cells but also to modeling neuron populations, brain regions \citep{gerster, chernihovskyi2005}, cardiac tissue \citep{takembo}, plant circadian rhythms \citep{Hartzell}, etc. It is important to note that in the majority of such applications networks of FHN models are of particular interest. This is especially true in the field of neuroscience, because the nervous system is a complex network of interconnected neurons, and processes occurring within it are directly related to its network structure. Thus, networks made up of single neuron models are of great interest. They can be used not only as models of some nerve cells populations but also as the whole brain model. 

Various methods for defining connections between nodes in FHN model networks have been explored in the literature. For instance, some studies consider networks with direct coupling as it was done by \cite{plotnikov_withdirect}, while others model neuron interactions using nonlinearities \citep{berner}, introduce delays in the equations \citep{masoliver, murguia} or use boundary coupling \citep{skrzypek}. Perhaps the most common type of coupling is diffusive coupling \citep{gerster, murguia, steur, omelchenko, hussain}, which is defined based on differences between corresponding state variables. In this work, we also employ diffusive coupling, adopting the general form proposed by \cite{omelchenko}. For example, \cite{gerster} used a diffusively coupled FHN network with 90 nodes to model the dynamics of the whole human brain during an epileptic seizure.

The problem addressed in this study assumes that only one state variable — specifically, the membrane potential or activator variable — of each FHN model is measurable. Furthermore, we account for potential inaccuracies in these measurements. These assumptions are natural; for instance, they reflect scenarios where EEG or ECoG signals are used as activator variables, as it was done by, for example, \cite{gerster, chernihovskyi2005, sevasteeva}.

Previous works have addressed similar problems. Very strong results in the sense of their applicability to a whole class of neural models were obtained by \cite{tyukin} and \cite{schiff}. The former solution is based on the application of observers, the Kalman filter behind the latter. However, the main advantage of these results may also be its weakness from some point of view: when working with systems whose structure is known, taking this structure into account during the identification algorithm design may lead to stronger results. Therefore, there is still an interest in new and possibly simpler solutions in this area. 

Numerous studies have specifically focused on identifying single FHN models. Most of these employ stochastic methods \citep{doruk, jensen}, particularly solutions based on the least squares method \citep{che, lou, wigren}. Other approaches, such as Kalman filter-based methods, have been proposed by \cite{azzalini, pagani, deng}. \cite{he} proposed very interesting identification method in which the measured variable is considered as the solution on the domain boundary of a function of several variables. Additionally, artificial neural networks have been used for parameter estimation in a few studies \citep{dong, rudi}. 

Our approach to the identification of FHN model networks relies on the application of a filter-differentiator, which addresses the practical unmeasurability of state variable derivatives, and the speed-gradient method \citep{fradkov_eng}. As a result, we derive a system of ordinary differential equations whose solutions converge to the true, albeit initially unknown, FHN parameter values corresponding to an input signal. In comparison with the majority of mentioned solutions, our results are supported by a theorem that provides the necessary and, as it seems to us, not burdensome conditions for asymptotic estimation achievement. Also, advantages of our method are its independence on the number of modeled neurons and is its applicability not only in the case of constant parameter values but also when parameters depend on time.

In terms of neural network model identification, \cite{zhou} considered this problem; however, identification in this context refers to determining the network topology, with the FHN model parameters assumed to be known a priori. Nevertheless, the problem of estimating FHN network parameters has not been previously solved or, even more so, mathematically investigated. 

\section{FHN neural network model}

Let us consider the dynamical system composed of $\mathrm N$ single FHN neuron models \citep{fitzhugh, nagumo}:
\begin{equation}
  \begin{cases}
    \dot u_k = u_k - \frac{u_k^3}{3} - v_k + I_{ext} + U_k,\\
    \dot v_k = \epsilon(u_k - a - bv_k) + V_k,
\end{cases} \label{FHN}
\end{equation}
where $k \in 1:\mathrm N$, $u_k(t)$ is the membrane potential of the $k$th neuron, $v_k(t)$ is the cumulative effect of all slow ion currents responsible for restoring the resting potential of the $k$th nerve cell membrane. We operate under the assumption that only membrane potential variables are measured, while $v_k$ and derivatives of all variables, $\dot u_k,\,\dot v_k$, cannot be measured. In addition, some measurement errors may be present that will be taken into account later.

Parameters $a$ and $b > 0$ determine the conductance characteristics of ion channels, $\epsilon > 0$ is the relative velocity change in slow ion currents. The main goal of this paper is to propose a way to determine values of these parameters using only measured variables, i.e. to identify the model \eqref{FHN}.

Parameter $I_{ext}$ is the known external current acting on neurons. 

Couplings between nodes are defined as differences between corresponding state variables with some constant coefficients: 
\begin{equation}
\begin{split}
    U_k = \sigma\sum\limits_{j = 1}^{\mathrm N}A_{kj}[B_{uu}&(u_j - u_k) \\&+ B_{uv}(v_j - v_k)],\\
    V_k = \sigma\sum\limits_{j = 1}^{\mathrm N}A_{kj}[B_{vu}&(u_j - u_k) \\&+ B_{vv}(v_j - v_k)],
\end{split}
\end{equation}
where $\sigma > 0$ is the overall coupling strength, $A_{kj}$ are elements of the network's adjacency matrix (we assume that the network graph is simple and undirected), $B_{uu},\,B_{uv},\,B_{vu}$ and $B_{vv}$ determine the interaction scheme between $u_k$ and $v_k$. Such type of coupling between network nodes is called diffusive and its properties have been widely studied \citep{steur, murguia}, but only in special cases when some or all of $B_{uv}, \,B_{vu}$ and $B_{vv}$ are equals to zero. Following papers of \cite{omelchenko} and \cite{gerster} we consider the general case which allows one to take into account not only direct couplings $u-u$ and $v-v$, but also cross-couplings between $u$ and $v$. 

Now let us transform the system \eqref{FHN} to more suitable form without unmeasurable variables and derivatives. First of all, to take systematic (scaling) errors into account it is assumed that $y_k=cu_k$, where $c > 0$ is an a priori unknown scaling factor enabling to take into
account systematic error occurring during the membrane potentials measurement and $y_k$, $k \in 1:\mathrm N$ are new variables. Then, each FHN model is transformed to the second-order differential equation:
\begin{equation}
   \ddot y_k = \theta_1^*\dot y_k + 3\theta_2^*y_k^2\dot y_k + \theta_3^*y_k +\theta_4^*y_k^3 +\frac1{\mathrm{N}}\theta_5^*,\label{FHN_mod1}
\end{equation}
where 
\begin{equation}
    \bm{\theta}^* = \begin{pmatrix}1 - \epsilon b \\ - c^{-2}/3 \\ \epsilon(b-1) \\ -\epsilon b c^{-2}/3 \\ \mathrm N c\epsilon(a+bI_{ext})\end{pmatrix}
\end{equation} is a vector of true parameter values that will be estimated later. Next, all models \eqref{FHN_mod1} are summed and the resulting equation is filtered to remove unmeasured derivatives. The second order filter-differentiator \citep{qwakernaak} is used:
$$
    \ddot x \approx p^2W(P)x,
$$
where the multiplier of $x$ is just transfer functions of the filter, $p = d/dt$, $W(p) = 1/(\tau_1p + 1)(\tau_2p + 1)$, $\tau_i>0$ $i = 1,2$; and new variables are obtained:
\begin{equation}
\begin{split}
y^* &= p^2W(p)\sum^{\mathrm N}_{k = 1} y_k,\\
x_1 &= pW(p)\sum^{\mathrm N}_{k = 1} y_k,\,
x_2 = pW(p)\sum^{\mathrm N}_{k = 1} y_k^3,\\
x_3 &= W(p)\sum^{\mathrm N}_{k = 1} y_k, \,
x_4 = W(p)\sum^{\mathrm N}_{k = 1} y_k^3.
\end{split} \label{new_variables}
\end{equation}

Therefore, the system \eqref{FHN} appears as follows:
\begin{equation}
    y^* = \theta_1^*x_1 + \theta_2^*x_2 + \theta_3^*x_3 + \theta_4^*x_4 + \theta_5^*.\label{FHN_mod}
\end{equation}

\begin{remark} \label{r1}
New variables in \eqref{FHN_mod} are outputs of stable linear systems \eqref{new_variables}. It means that boundedness of $y^*,\,x_1,\,x_2,\,x_3$ and $x_4$ follows from boundedness of inputs of these systems which are sums of $y_k$ and $y_k^3$. 
\end{remark}

\begin{remark} \label{r2}
Original parameters of the FHN model \eqref{FHN} is related with parameters $\theta^*$ as it follows:
\begin{equation}
\begin{split}
    a &= \frac{\theta^*_5\sqrt{-3\theta^*_2} - \mathrm N I_{ext}(1-\theta^*_1)}{\mathrm{N}(1 - \theta_1^* - \theta_3^*)},\\ 
    b &= \frac{1 - \theta_1^*}{1 - \theta_1^* - \theta_3^*},\;
    c = \frac1{\sqrt{-3\theta_2^*}},\\
    \varepsilon &= 1 - \theta_1^* - \theta_3^*. 
\end{split}\label{old_parameters}
\end{equation} 
\end{remark}

To determine the values of $\theta^*$ it is proposed in this paper to use the speed-gradient method based on the integral objective function which is described in detail and mathematically founded in Appendix \ref{secA1}. 

\section{Identification of FHN neural network model}

Let us transform the considered system \eqref{FHN}, \eqref{new_variables} to the general form \eqref{syst}:
\scriptsize
\begin{equation}
\begin{cases}
   \dot y_1 = y_1 + \theta_2^*y_1^3 -\frac1{\sqrt{-3\theta_2^*}}v_1 +\frac{I_{ext}}{\sqrt{-3\theta_2^*}} + Y_1',\\
   \begin{split}
   \dot v_1 = &\;(1-\theta_1^*-\theta_3^*)\sqrt{-3\theta_2^*}y_1 - \frac{\theta_5^*\sqrt{-3\theta_2^*}}{\mathrm N} \\&+ (\theta_1^* - 1)v_1 + V_1',
   \end{split}\\
   \ldots\\
   \begin{split}
   \dot y_{\mathrm N} = &\;y_{\mathrm N} + \theta_2^*y_{\mathrm N}^3 -\frac1{\sqrt{-3\theta_2^*}}v_{\mathrm N} + \frac{I_{ext}}{\sqrt{-3\theta_2^*}} \\&+ Y_{\mathrm N}',
   \end{split}\\
   \begin{split}
   \dot v_{\mathrm N} = &\;(1-\theta_1^*-\theta_3^*)\sqrt{-3\theta_2^*}y_{\mathrm N} - \frac{\theta_5^*\sqrt{-3\theta_2^*}}{\mathrm N} \\&+ (\theta_1^* - 1)v_{\mathrm N} + V_{\mathrm N}',
   \end{split}\\
   \dot w_1 = -\frac{\tau_1 + \tau_2}{\tau_1\tau_2}w_1 + w_2 - \frac{\tau_1 + \tau_2}{\tau_1^2\tau_2^2}\sum\limits^{\mathrm N}_{k = 1} y_k,\\
   \dot w_2 = -\frac1{\tau_1\tau_2}w_1 - \frac1{\tau_1^2\tau_2^2}\sum\limits^{\mathrm N}_{k = 1} y_k,\\
   \dot x_1 = w_1 + \frac1{\tau_1\tau_2}\sum\limits^{\mathrm N}_{k = 1} y_k,\\
   \dot x_2 = -\frac{\tau_1 + \tau_2}{\tau_1\tau_2}x_2 - \frac1{\tau_1\tau_2}x_4 + \frac1{\tau_1\tau_2}\sum\limits^{\mathrm N}_{k = 1} y_k^3,\\
   \dot x_3 = x_1,\\
   \dot x_4 = x_2,
\end{cases} \label{F}
\end{equation}
\normalsize
where \\$Y_k' = \sigma\sum_{j = 1}^{\mathrm N}A_{kj}\biggl[B_{uu}(y_j - y_k) + \frac{B_{uv}}{\sqrt{-3\theta_2^*}}(v_j - v_k)\biggl]$, $V_k' = \sigma\sum_{j = 1}^{\mathrm N}A_{kj}[\sqrt{-3\theta_2^*}B_{vu}(y_j - y_k) + B_{vv}(v_j - v_k)]$, $k \in 1:\mathrm N$. Here first $2\mathrm N$ rows model membrane potentials and depend on unknown parameters $\theta^*$. We mean that electrical signals of biological neurons (for example, EEG signals) will be used instead of this equations solutions, $y_k$, when setting up the network of FHN models in practice. The other six rows of the system \eqref{F} are equations of filters \eqref{new_variables} rewritten in differential form.

The equation \eqref{perfect_regression} corresponds to \eqref{FHN_mod}, i.e. $\bm z(\bm x) = (x_1 \quad x_2 \quad x_3 \quad x_4 \quad 1)^{\mathrm{T}}$. Thus, the adaptive law \eqref{theta(t)} can be used to identificate the network of FHN models. Let us duplicate it here and note that the proposed solution is a fifth-order system of ordinary differential equations (plus the six-order system for filtration):  
\begin{equation}
    \bm{\dot \theta} = -\bm\Gamma\delta(\bm x, \bm \theta)\bm z(\bm x),\label{theta(t)}
\end{equation}
where $\bm \theta(t)$ is a vector of estimates of $\bm \theta^*$, $\bm \Gamma \in \mathbb{R}^{5, 5}$ is a symmetric positive-definite gain matrix (for example, $\bm \Gamma = \mathrm{diag}\{\gamma_1,...,\gamma_5\}, \gamma_i > 0\;i\in1:5$),  $\delta(\bm x(t), \bm\theta(t)) = (\bm\theta(t)^{\mathrm{T}}\bm z(t) - y^*(t))$. These differential equations should be solved numerically and obtained solution should be used for the model \eqref{FHN_mod} or for the original model \eqref{FHN} with preliminary computation of original parameters using \eqref{old_parameters}. Now it remains to discuss the question of these estimates convergence to their true values. 

Let us denote by $r$ a number for which the following inequality is true:
\begin{equation}
\begin{split}
    R(\bm y, \bm v) = \frac1\sigma\sum_{k = 1}^{\mathrm N}(y_kY_k& + v_kV_k) \\&\leqslant r\sum\limits_{k = 1}^{\mathrm N}(y_k^2 + v_k^2).\label{0}
\end{split}
\end{equation}
There is a quadratic form in the vector $(\bm y\quad \bm v)^{\mathrm T}$ the left-hand side of this inequality, thus such $r$ always exists. The following theorem holds for the proposed solution:

\begin{theorem} \label{th}
    If the vector function of the FHN neural network model's observed values \eqref{FHN} $\bm z(\bm x) = (x_1 \quad x_2 \quad x_3 \quad x_4 \quad 1)^{\mathrm{T}}$ satisfies the persistent excitation (PE) condition \eqref{PE} and $\sigma < \varepsilon b/r$, then the identification law \eqref{theta(t)} guarantees that $\bm \theta(t) - \bm \theta^* \rightarrow 0 \,\, \mbox{for}\,\, t \rightarrow \infty$.
\end{theorem}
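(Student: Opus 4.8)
The plan is to recast the estimator as a homogeneous linear time-varying (LTV) system in the parameter error and then invoke persistent excitation, the coupling condition $\sigma<\varepsilon b/r$ entering only to guarantee that the regressor stays bounded. Set $\tilde{\bm\theta}(t)=\bm\theta(t)-\bm\theta^*$. Since the filtered regression \eqref{FHN_mod} holds exactly --- the linear filter $W(p)$ commutes with the summation over $k$ and, because the adjacency matrix is symmetric, the diffusive couplings cancel under the sum, $\sum_k U_k=\sum_k V_k=0$ --- one has $y^*=\bm\theta^{*\mathrm T}\bm z$, so that
\begin{equation*}
\delta(\bm x,\bm\theta)=\bm\theta^{\mathrm T}\bm z-y^*=(\bm\theta-\bm\theta^*)^{\mathrm T}\bm z=\tilde{\bm\theta}^{\mathrm T}\bm z .
\end{equation*}
Inserting this into \eqref{theta(t)} yields
\begin{equation*}
\dot{\tilde{\bm\theta}}=-\bm\Gamma\,\bm z(t)\bm z(t)^{\mathrm T}\,\tilde{\bm\theta}.
\end{equation*}
With $V=\tfrac12\tilde{\bm\theta}^{\mathrm T}\bm\Gamma^{-1}\tilde{\bm\theta}$ one gets $\dot V=-(\bm z^{\mathrm T}\tilde{\bm\theta})^2=-\delta^2\le0$, the Lyapunov functional underlying the speed-gradient scheme of Appendix \ref{secA1}; hence $\tilde{\bm\theta}$ is bounded and $V$ converges.

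The decisive step --- and the place where $\sigma<\varepsilon b/r$ is used --- is to show that $\bm z(\bm x(t))$ is bounded, for only then does the general persistent-excitation result apply. By Remark \ref{r1} it suffices to bound the filter inputs $\sum_k y_k$ and $\sum_k y_k^3$, i.e. the network state $(\bm y,\bm v)$ of \eqref{F}. I would take the energy function $W=\tfrac12\sum_{k=1}^{\mathrm N}(y_k^2+v_k^2)$ and differentiate it along \eqref{F}. The coupling terms contribute exactly $\sigma R(\bm y,\bm v)$ in the notation of \eqref{0}, which by the definition of $r$ is bounded above by $\sigma r\sum_k(y_k^2+v_k^2)$. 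The uncoupled part contributes the linear recovery damping $-\varepsilon b\,v_k^2$ together with the dissipative cubic term $\theta_2^* y_k^4$ (recall $\theta_2^*<0$). Collecting the $v_k^2$ terms gives the coefficient $-(\varepsilon b-\sigma r)$, which is negative precisely under the hypothesis $\sigma<\varepsilon b/r$, while the activator directions $y_k$ are controlled by the quartic term, which absorbs the remaining linear-growth and cross terms. This yields ultimate boundedness of $(\bm y,\bm v)$ and hence, by the stability of the filters \eqref{new_variables}, boundedness of $\bm z$.

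With $\bm z$ bounded and the PE condition \eqref{PE} in force, the origin of the LTV error system $\dot{\tilde{\bm\theta}}=-\bm\Gamma\bm z\bm z^{\mathrm T}\tilde{\bm\theta}$ is uniformly (exponentially) asymptotically stable by the classical persistent-excitation lemma for gradient estimators; equivalently, boundedness of $\dot{\bm z}$ makes $\delta\to0$ by Barbalat's lemma and PE upgrades this to $\tilde{\bm\theta}\to0$. Either way $\bm\theta(t)-\bm\theta^*\to0$ as $t\to\infty$, as claimed. I expect the genuine obstacle to be the boundedness step rather than the PE argument: because the general cross-coupling ($B_{uv},B_{vu},B_{vv}$ all nonzero) makes the coupling contribution to $\dot W$ indefinite, one must rely on the quadratic estimate \eqref{0} and on the margin $\sigma<\varepsilon b/r$ so that the recovery damping and the cubic nonlinearity dominate it; verifying this sign-definiteness is the one computation specific to the FitzHugh--Nagumo network, everything downstream being the standard speed-gradient machinery of Appendix \ref{secA1}.
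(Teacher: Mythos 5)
Your proposal is correct and follows essentially the same route as the paper: the convergence argument is the standard speed-gradient/PE machinery of Appendix \ref{secA1} (Lyapunov decrease of a $\bm\Gamma^{-1}$-weighted error norm, Barbalat, then PE to upgrade $\delta\to0$ to $\tilde{\bm\theta}\to0$), and the only FHN-specific work is the boundedness of $(\bm y,\bm v)$, which the paper also establishes with the energy function $H=\tfrac12\sum_k(y_k^2+v_k^2)$, the estimate $\sigma R\le\sigma r\sum_k(y_k^2+v_k^2)$, and quadratic/quartic absorption of the cross and linear terms under the strict margin $\sigma<\varepsilon b/r$. You correctly identified both the role of the symmetric adjacency matrix in cancelling the couplings from the summed regression and the fact that the hypothesis on $\sigma$ enters only through the boundedness step.
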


Proof of this theorem can be found in Appendix \ref{A2}.

\begin{remark} \label{r3}
The PE condition \eqref{PE} essentially means that the vector function $\bm z(t)$ do not converge to any hyperplane in $\mathbb{R}^m$ for $t \to \infty$. In other words, $\bm z(t)$ is rather diverse to assure the convergence of the parameters in \eqref{regression} to their true values. The complex dynamics of neural
processes gives hope that there would not be a problem with the presence of such diversity. The PE condition appear in many problems related to system identification and has been extensively studied by, for example, \cite{narendra}, \cite{fradkov_eng} and \cite{anderson}. The question of its numerical check is discussed in the following section.
\end{remark}

\begin{remark}\label{r4}
    For example, the maximum eigenvalue of the matrix corresponding to quadratic form $R(\bm y, \bm v)$ can be taken as r in \eqref{0}. It is easy to confirm that this matrix's eigenvalues are the set consist of connectivity graph Laplacian matrix's eigenvalues multiplied by $B_{uu}$ and Laplacian matrix's eigenvalues multiplied by $B_{vv}$.
\end{remark}

\section{Simulation results}

Now let us study the designed algorithm by computer simulation. For this purpose we take signals $y_k$ obtained with modeling of the network of FHN models with $\mathrm N = 5$ nodes and some assigned parameter values and initial data. Couplings between neurons are chosen as follows: 

\begin{figure}[!h]
\center{\includegraphics[width=1\linewidth]{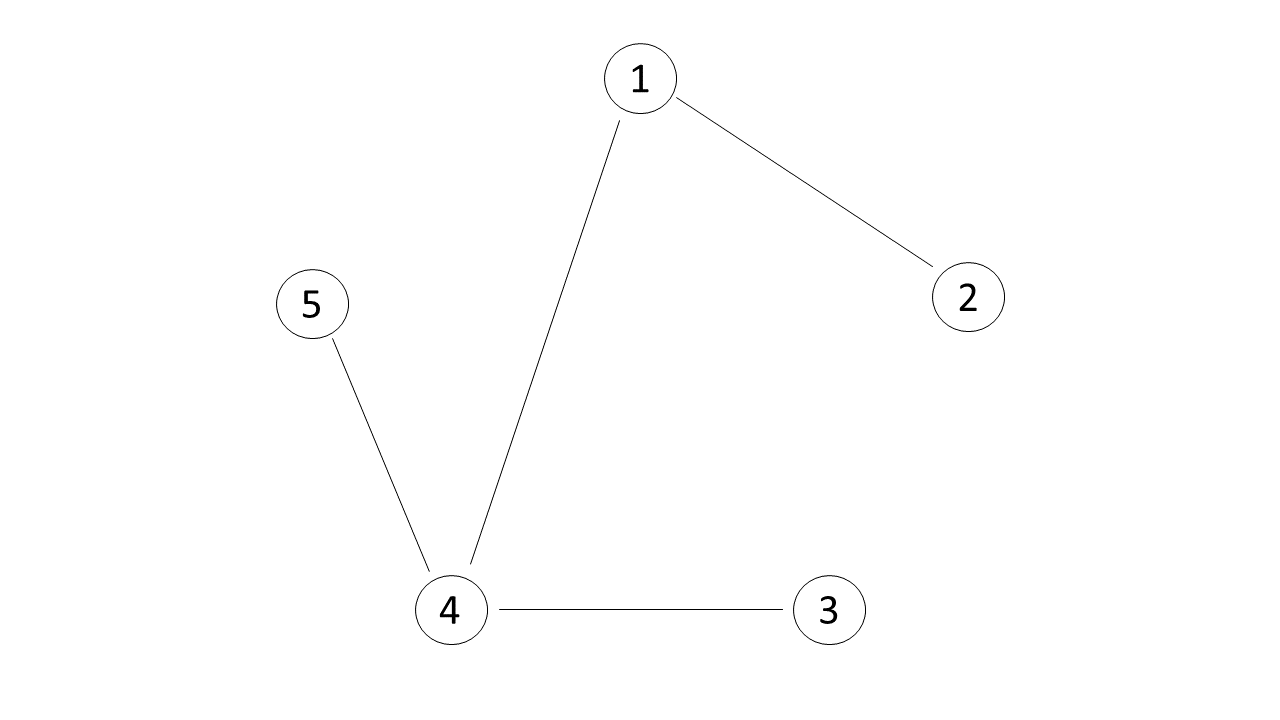} }
\caption{The topology of the considered network of five FHN models.}
\label{network_structure}
\end{figure}
\begin{equation}
\begin{split}
    &\sigma = 0.05, \;\phi = \frac{\pi}2 - 0.1,\\
    &B_{uu} = B_{vv} = \cos\phi, \\&B_{uv} = \sin\phi,\; B_{vu} = -\sin\phi.
\end{split} \label{network_params}
\end{equation}
Here we use the same way of couplings definition in FHN neural networks, which was suggested and investigated by \cite{omelchenko} and \cite{gerster}. The parameter values are chosen as follows:
\begin{equation}
\begin{split}
    &a = -0.7, \; b = 0.8, \; \varepsilon = 0.08, \; c = 1, \\ &I_{ext} = 1, \; \tau_1 = \tau_2 = 0.01, \; \bm \Gamma = \mathrm{\bm I};\label{params}
\end{split}
\end{equation}
and as initial data we take the following set of values:
\begin{equation}
\begin{split}
    \bm y(0) = (&0.7 \quad 0.1 \quad 0.9\quad -0.3\quad -0.6)^{\mathrm{T}},\\
    \bm v(0) = (&0.4 \quad 0.75\quad -0.1\quad -0.5\quad 0)^{\mathrm{T}},\\
    &\bm\theta(0) = \begin{pmatrix}0.98 \\ -0.275 \\ 0.005\\ -0.004\\ 0.066\end{pmatrix}.
\end{split} \label{initial_data}
\end{equation}

All of numerical experiments were made in the Simulink environment with the fourth-order Runge–Kutta method with step size $10^{-4}$.

First of all, let us numerically check fulfilment of Theorem \ref{th} condition which guaranties correct identification. This check reduces to the search of the integration interval length, L, at which the symmetric matrix $\bm M_L$ from \eqref{PE} is positive definite. Fig. \ref{pe_check} demonstrates that this condition is certainly met for $L \geqslant 4$. Let us note that with the growth of $L$ the smallest eigenvalue of $\bm M_L$ will increase too because of $\bm M_L$ structure. Therefore, $\bm z(t)$ is persistently excited for $L \geqslant 4$, small enough $\alpha$ and $t_0 > 0$ in the considered case.
 
\begin{figure}[!h]
\center{\includegraphics[width=1\linewidth]{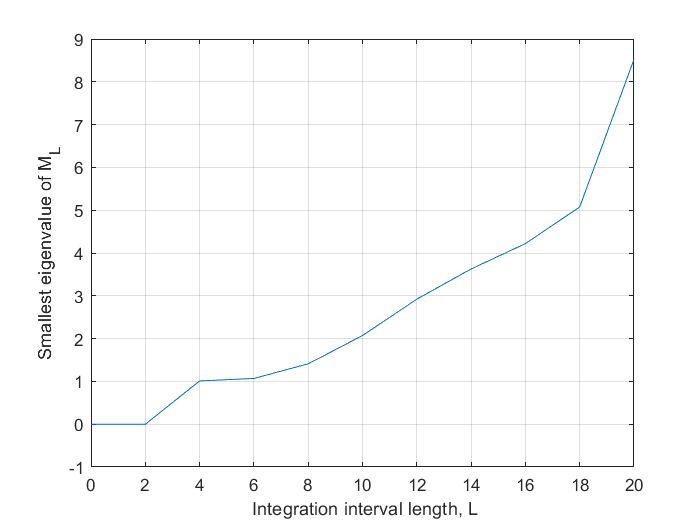} }
\caption{Dependence between the integration interval
length, $L$, and the smallest eigenvalue of the considered matrix, $\bm M_L$.}
\label{pe_check}
\end{figure}

In terms of the condition of $\sigma$ smallness, we can estimate the quadratic form $R(\bm y, \bm v)$ with maximum eigenvalue of its matrix, which equals to $0.42$ for the considered network, according to Remark \ref{r4}. Thus, if $\sigma < \varepsilon b/0.42 \approx 0.15$ then the Theorem \ref{th} conditions are fulfilled. 

The identification results are illustrated in Fig. \ref{experiment1} and in Table \ref{table1}. These results are rather consistent with our expectations: the identification goal is achieved. Furthermore, Table \ref{table1} shows that the error of $a, b, c, \varepsilon$ identification was reduced in more than 200 times (from approximately 0.81541 to 0.00358) by the algorithm \eqref{theta(t)}. If necessary the accuracy can be increased even more by the choice of a numerical method and its step size or by taking larger simulation time. 

\begin{figure}[!h]
\center{\includegraphics[width=1\linewidth]{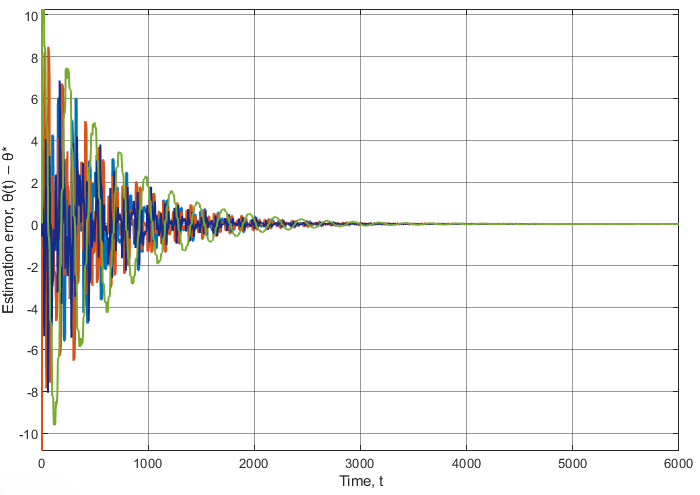} }
\caption{Parameter estimation errors, $\theta_i - \theta_i^*$, of the network of the FHN models \eqref{F} with couplings defined as on Fig. \ref{network_structure} and \eqref{network_params}, parameter values \eqref{params} and the set of initial data \eqref{initial_data}, obtained with the algorithm \eqref{theta(t)}.}
\label{experiment1}
\end{figure}

\begin{table}[h]
\caption{Values of original FHN parameters, $a,\,b,\,c,\,\epsilon$, corresponding (according to \eqref{old_parameters}) to initial data \eqref{initial_data}, $\bm \theta(0)$ and obtained estimations, $\bm \theta(6000)$, and theirs true values \eqref{params}.}\label{table1}%
\begin{tabular}{@{}llll@{}}
\toprule
 &$\bm \theta(0)$&$\bm \theta(6000)$&$\bm \theta^*$\\
\midrule
$a$ & -0.3 & -0.703224 & -0.7\\
$b$ & 1.5 & 0.801543 & 0.8\\
$c$ & 1.1 & 1.000189 & 1\\
$\epsilon$ & 0.01 & 0.079875 & 0.08\\
\botrule
\end{tabular}
\end{table}

Also, the choice of the gain matrix $\bm \Gamma$ in \eqref{theta(t)} can improve the identification accuracy and time. It may seem that we should take gains as big as it possible to increase the convergence rate. However, it is true in only perfect conditions of total lack of disturbances, but here we are inevitably dealing with a filtering error. This error eventually attenuates due to the stability of linear filters, but nevertheless  it has a significant impact on the system at the beginning of its operation. This is why not small enough gains cause occurrence of high amplitude oscillations of $\bm \theta(t)$. Fig. \ref{g_big} illustrates this effect.

\begin{figure}[!h]
\center{\includegraphics[width=1\linewidth]{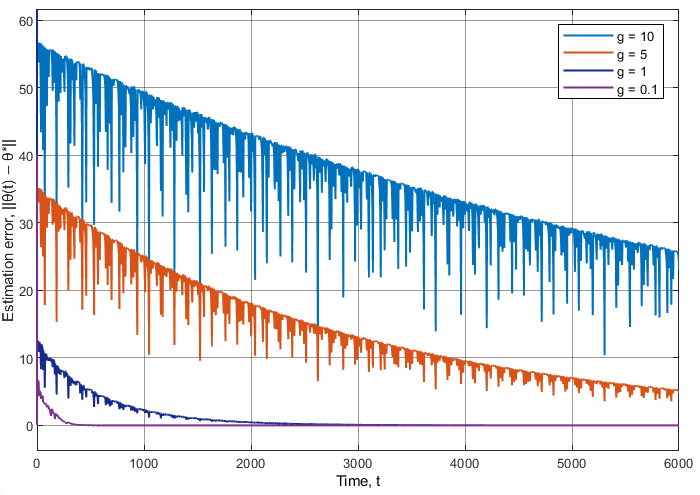} }
\caption{The norm of estimation error, $\|\bm \theta - \bm\theta^*\|$, of the network of the FHN models \eqref{F} with couplings defined as on Fig. \ref{network_structure} and \eqref{network_params}, parameter values \eqref{params} and the set of initial data \eqref{initial_data}, obtained with the algorithm \eqref{theta(t)}, in case of $\bm \Gamma = g\mathrm{\bm I}$, $g = 0.1, 1, 5, 10$.}
\label{g_big}
\end{figure}

On the other hand, too small gains cause identification slowdown as it can be seen at Fig. \ref{g_small}. Therefore, the choice of $\bm \Gamma$ is not a trivial task especially when working with real data in conditions of the lack of information. This choice can potentially be simplified by the speed-gradient algorithm robustification (or reducing its sensitivity to disturbances) as was done, for example, in \cite{cnn}.

\begin{figure}[!h]
\center{\includegraphics[width=0.8\linewidth]{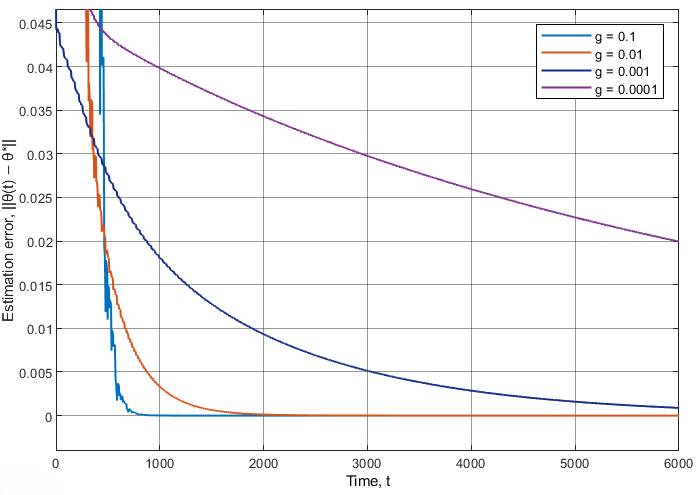} }
\caption{The norm of estimation error, $\|\theta_i - \theta_i^*\|$, of the network of the FHN models \eqref{F} with couplings defined as on Fig. \ref{network_structure} and \eqref{network_params}, parameter values \eqref{params} and the set of initial data \eqref{initial_data}, obtained with the algorithm \eqref{theta(t)}, in case of $\Gamma = g\mathrm{I}$, $g = 0.0001, 0.001, 0.01, 0.1$.}
\label{g_small}
\end{figure}

Now let us repeat the experiment from Fig. \ref{experiment1} with simpler couplings definition: 
\begin{equation}
    \sigma = 0.05, \; B_{uu} = 1, \;B_{vv} = B_{uv} = B_{vu} = 0; \label{network_params2}
\end{equation}
other parameter values:
\begin{equation}
\begin{split}
    &a = -0.525, \; b = 0.6, \; \varepsilon = 0.06, \; c = 0.75, \\ &I_{ext} = 1, \; \tau_1 = \tau_2 = 0.01, \; \bm\Gamma = \mathrm{\bm I};\label{params2}
\end{split}
\end{equation}
and other initial data for \eqref{theta(t)}:
\begin{equation}
\begin{split}
    y(0) = (&0.7 \quad 0.1 \quad 0.9\quad -0.3\quad -0.6)^{\mathrm{T}},\\
    v(0) = (&0.4 \quad 0.75\quad -0.1\quad -0.5\quad 0)^{\mathrm{T}},\\
    &\bm\theta(0) = \begin{pmatrix}0.98 \\ -0.353 \\ -0.08\\ -0.007\\ -0.339\end{pmatrix}.\
\end{split} \label{initial_data2}
\end{equation}
Results of this experiment are shown in Fig. \ref{experiment2} and Table \ref{table2}. Now the error of original parameters estimation has reduced in more than 7000 times (from approximately 0.59213 to 0.00008) during the time t = 6000. 

\begin{figure}[!h]
\center{\includegraphics[width=1\linewidth]{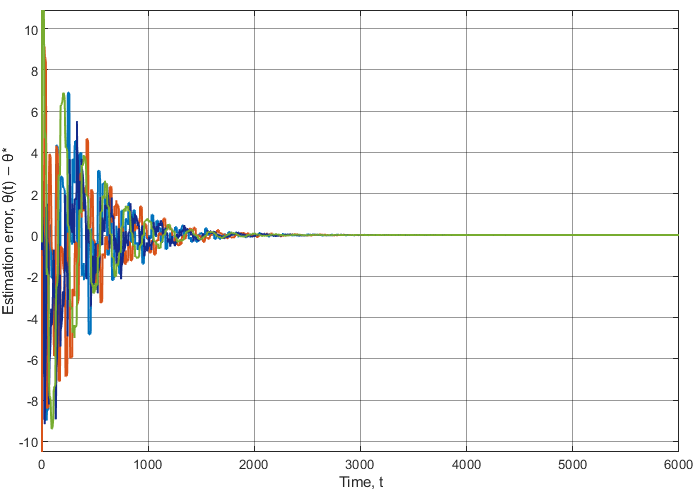} }
\caption{Parameter estimation errors, $\theta_i - \theta_i^*$, of the network of the FHN models \eqref{F} with couplings defined as on Fig. \ref{network_structure} and \eqref{network_params2}, parameter values \eqref{params2} and the set of initial data \eqref{initial_data2}, obtained with the algorithm \eqref{theta(t)}.}
\label{experiment2}
\end{figure}

\begin{table}[h]
\caption{Values of original FHN parameters, $a,\,b,\,c,\,\epsilon$, corresponding (according to \eqref{old_parameters}) to initial data \eqref{initial_data2}, $\bm\theta(0)$ and obtained estimations, $\bm\theta(6000)$, and theirs true values \eqref{params2}.}\label{table2}%
\begin{tabular}{@{}llll@{}}
\toprule
 &$\bm \theta(0)$&$\bm \theta(6000)$&$\bm \theta^*$\\
\midrule
$a$ & -0.9 & -0.525049 & -0.525\\
$b$ & 0.2 & 0.600064 & 0.6\\
$c$ & 0.97 & 0.750003 & 0.75\\
$\epsilon$ & 0.1 & 0.059992 & 0.06\\
\botrule
\end{tabular}
\end{table}

\section{\label{sec:level1}Conclusion}

In the paper, an algorithm for the parameter estimation of the FHN neural network, based on the speed-gradient method and the application of a filter-differentiator, was proposed, theoretically justified, and tested in numerical experiments. Its main advantage lies in accounting for the complex network structures that arise during the mathematical modeling of neural processes. Additionally, possible signal errors and the practical unmeasurebility of some variables and derivatives were taken into consideration. Despite this complexity, the proposed method remains relatively simple, as it only requires numerical solving of a system of several ordinary differential equations regardless of the number of modeled neurons. Also adaptive properties of this identification algorithm let to use it even when model parameters drift in time.

General results regarding the identification properties of the speed-gradient method, based on the integral objective function, were obtained in the paper. It was established that a sufficient condition for the FHN neural network identification using the speed-gradient algorithm is the persistent excitation (PE) of the observed signals. Numerical experiments confirmed that this condition was consistently met, suggesting that the complex dynamics of neural processes may contribute to the validity of PE in other scenarios as well. 

The simulation results illustrate the proposed solution. It was observed that the identification time of the FHN network is relatively short and that the accuracy of the parameter estimation is high. For instance, in one experiment, the parameter estimation error was reduced by more than 7000 times. Moreover, the accuracy and convergence rate can be improved by an appropriate choice of the gain matrix in the algorithm. However, this choice must be made carefully, as the gain matrix affects not only the convergence rate but also the amplitude of disturbances arising from the use of filters. In future work, it is planned to investigate robust versions of the algorithm to mitigate the influence of disturbances on the identification process.

The proposed approach can be applied to modeling the complex dynamics of nervous systems, particularly the human brain. For example, EEG signals can serve as inputs to the identification system, linking mathematical models of neural networks with real data collected by neurophysiologists. However, the obtained results are not strictly dependent on the specific model interpretation and, therefore, have a wide range of applications.

\backmatter

\bmhead{Acknowledgements}

This work was performed in IPME RAS and supported by  Russian Science Foundation, project 23-41-00060.

\section*{Declarations}

The authors declare no conflict of interest.

\begin{appendices}

\section{Identification using the speed-gradient method based on the integral objective function}\label{secA1}

Let us consider a general state space nonlinear system model:
\begin{equation}
    \bm{\dot x} = \bm F(\bm x, \bm \theta^*, t), \label{syst}
\end{equation}
where $\bm x(t) \in \mathbb{R}^n$ is a state vector, $\bm \theta^* \in \mathbb{R}^m$ is a vector of true (but unknown) parameter values, $t \geqslant 0$. 

Assume that $y^*(t) = y^*(x(t)) \in \mathbb{R}$, $\bm z(t) = \bm z(\bm x(t)) \in \mathbb{R}^m$ is the observable output of the system \eqref{syst}, for which the following equation holds:
\begin{equation}
    y^*(t) = \bm\theta^{*\mathrm{T}}\bm z(t). \label{perfect_regression}
\end{equation}
In other words, the system \eqref{syst} can be presented in a linear regression form \eqref{perfect_regression}. 

The problem of system \eqref{syst} identification is mathematically formulated in terms of \eqref{perfect_regression}: build an adaptive system with the estimate of the output variable of \eqref{syst}, $y(t) \in \mathbb{R}$, as the output variable and estimates of parameters of \eqref{syst}, $\bm \theta(t) \in \mathbb{R}^m$, as the parameters providing the identiﬁcation goal:
\begin{equation}
\begin{aligned} \label{goal}
1)&\,\,y(t) - y^*(t) \rightarrow 0 \,\, \mbox{for}\,\, t \rightarrow \infty,\\
2)&\,\,\bm\theta(t) - \bm\theta^* \rightarrow 0 \,\, \mbox{for}\,\, t \rightarrow \infty.
\end{aligned}
\end{equation}

To solve the problem, an adaptive system structure is chosen as follows:
\begin{equation}
    y(t) = \bm\theta(t)^{\mathrm{T}}\bm z(t).
    \label{regression}
\end{equation}
In order to tune the vector $\bm\theta(t)$ of parameter estimates the speed-gradient method based on the following integral objective function, $Q_t$, \citep{fradkov_eng} is used:
\begin{equation}
    Q_t = \int\limits_0^t \frac12 \delta^2(\bm x(s), \bm\theta(s)) ds,\label{Q}
\end{equation}
where $\delta(\bm x(t), \bm\theta(t)) = y(t) - y^*(t) = (\bm\theta(t) - \bm\theta^*)^{\mathrm{T}}\bm z(t)$. In this case the speed-gradient algorithm looks as follows: 
\begin{equation}
    \bm{\dot \theta} = -\bm\Gamma\delta(\bm x, \bm\theta)\bm z(x),\label{theta(t)_A1}
\end{equation}
where $\bm\Gamma \in \mathbb{R}^{m, m}$ is a symmetric positive-definite gain matrix.

\begin{theorem}\label{thA1}
Consider the system \eqref{syst} with the algorithm of unknown parameters estimation \eqref{theta(t)} obtained with the speed-gradient method based on the integral objective function \eqref{Q}.  Assume that the following conditions are met:\\
1. Vector functions $\bm F(\bm x, \bm\theta^*, t),\,\bm z(\bm x)$ are defined and continuous with their partial derivatives with respect to components of the vector $\bm x$.\\
2. $\bm F(\bm x(t), \bm \theta^*, t),\,\bm z(\bm x(t))$ are bounded for $0\leqslant t<\infty$. \\
3. $\bm z(t)$ satisfies the so called persistent excitation (PE) condition, i.e. there exist positive numbers $L, \alpha, t_0$ such that the following inequality holds for every $t > t_0$:
\begin{equation}
\bm M_L := \int\limits_t^{t+L}\bm z(s)\bm z(s)^{\mathrm{T}}ds \geqslant \alpha \bm I_m. \label{PE}
\end{equation}
Then, the identification goal \eqref{goal} in the system \eqref{syst}, \eqref{theta(t)} is achieved.
\end{theorem}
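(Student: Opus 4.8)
The plan is to pass to the parameter error $\tilde{\bm\theta}(t) := \bm\theta(t) - \bm\theta^*$ and to analyse the resulting homogeneous linear time-varying (LTV) error dynamics. Since $y^*(t) = \bm\theta^{*\mathrm T}\bm z(t)$ and $y(t) = \bm\theta(t)^{\mathrm T}\bm z(t)$, the residual is $\delta = y - y^* = \tilde{\bm\theta}^{\mathrm T}\bm z$, and because $\bm\theta^*$ is constant the adaptive law \eqref{theta(t)_A1} reads $\dot{\tilde{\bm\theta}} = -\bm\Gamma\delta\bm z = -\bm\Gamma\,\bm z\bm z^{\mathrm T}\tilde{\bm\theta}$. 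First I would pick the quadratic Lyapunov function $V(\tilde{\bm\theta}) = \tfrac12\,\tilde{\bm\theta}^{\mathrm T}\bm\Gamma^{-1}\tilde{\bm\theta}$, which is well defined and positive definite because $\bm\Gamma = \bm\Gamma^{\mathrm T} \succ 0$. A direct computation gives $\dot V = \tilde{\bm\theta}^{\mathrm T}\bm\Gamma^{-1}\dot{\tilde{\bm\theta}} = -\delta^2 \le 0$. Hence $V$ is nonincreasing, so $\tilde{\bm\theta}$ (and therefore $\bm\theta$) is bounded and $V(t)$ converges to a limit; integrating $\dot V$ yields $\int_0^\infty\delta^2(s)\,ds = V(0) - V(\infty) < \infty$, i.e. $\delta \in L^2$.

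For the first identification goal $y - y^* = \delta \to 0$ I would invoke Barbalat's lemma. Assumptions 1 and 2 guarantee that $\bm z(t)$ is bounded with bounded derivative $\dot{\bm z}(t) = (\partial \bm z/\partial \bm x)\,\bm F$; together with the already-established boundedness of $\tilde{\bm\theta}$ and of $\dot{\tilde{\bm\theta}} = -\bm\Gamma\delta\bm z$, this makes $\dot\delta = \dot{\tilde{\bm\theta}}^{\mathrm T}\bm z + \tilde{\bm\theta}^{\mathrm T}\dot{\bm z}$ bounded, so $\delta$ is uniformly continuous. A uniformly continuous $L^2$ function tends to zero, which gives goal 1.

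The heart of the proof is the second goal, $\tilde{\bm\theta} \to 0$, and this is where the persistent excitation condition \eqref{PE} is essential. The idea is to show that the LTV system $\dot{\tilde{\bm\theta}} = -\bm\Gamma\,\bm z\bm z^{\mathrm T}\tilde{\bm\theta}$ is uniformly (exponentially) stable by proving that $V$ drops by a fixed fraction over every excitation window. Over $[t, t+L]$ one has
\begin{equation}
V(t+L) - V(t) = -\int_t^{t+L}\bigl(\bm z(s)^{\mathrm T}\tilde{\bm\theta}(s)\bigr)^2\,ds. \label{Vdrop}
\end{equation}
If $\tilde{\bm\theta}$ could be treated as frozen over the window, \eqref{PE} would immediately give the lower bound $\tilde{\bm\theta}^{\mathrm T}\bm M_L\tilde{\bm\theta} \ge \alpha\|\tilde{\bm\theta}\|^2$, hence a contraction $V(t+L) \le \rho\,V(t)$ with $\rho \in (0,1)$. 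The main obstacle is precisely that $\tilde{\bm\theta}$ is not constant on the window. I would control its variation by writing $\tilde{\bm\theta}(s) = \tilde{\bm\theta}(t+L) + \int_s^{t+L}\bm\Gamma\,\delta(\tau)\bm z(\tau)\,d\tau$, substituting into \eqref{Vdrop}, and estimating the resulting cross terms via the Cauchy--Schwarz inequality, the boundedness of $\bm z$ and $\bm\Gamma$, and the fact that the remainder is itself governed by $\int_t^{t+L}\delta^2\,d\tau$, i.e. by the very quantity $V(t) - V(t+L)$. Balancing these estimates yields $V(t+L) \le \rho\,V(t)$ for some $\rho < 1$ determined by $\alpha$, $L$, $\|\bm\Gamma\|$ and the bound on $\bm z$; iterating over consecutive windows then produces geometric decay of $V$, hence $\tilde{\bm\theta}(t) \to 0$ exponentially, which is goal 2.

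Finally I would remark that this last step is a standard consequence of persistent excitation for gradient-type estimators, so one may alternatively cite the known result that boundedness of $\bm z$ together with the PE condition \eqref{PE} implies uniform asymptotic stability of the error system (cf. the references on PE in Remark \ref{r3}); the self-contained window estimate above is the route I expect to be the only genuinely delicate part of the argument.
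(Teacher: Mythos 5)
Your proposal is correct, and the first half coincides with the paper's argument up to cosmetic differences: the paper takes the Lyapunov function $V_t = Q_t + \tfrac12\|\bm\theta(t)-\bm\theta^*\|^2_{\bm\Gamma^{-1}}$ (so $\dot V_t = -\tfrac12\delta^2$ rather than your $-\delta^2$), derives boundedness of $\bm\theta$ and $\delta\in L^2$, and applies Barbalat's lemma exactly as you do to get goal 1. Where you genuinely diverge is the second goal. The paper does not perform any window estimate: it invokes a ready-made lemma (Proposition \ref{prA1}, cited from the literature) stating that if $\bm z$ is PE, $\bm w^{\mathrm T}\bm z\to 0$ and $\dot{\bm w}\to 0$, then $\bm w\to 0$; both hypotheses are immediate from $\delta\to 0$ and the boundedness of $\bm z$, so the paper's step is two lines long but only yields asymptotic convergence and leans on an external result. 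Your route --- freezing $\tilde{\bm\theta}$ over each excitation window $[t,t+L]$, bounding its drift by $\|\bm\Gamma\|$, the bound on $\bm z$, and $\int_t^{t+L}\delta^2\,d\tau = V(t)-V(t+L)$ via Cauchy--Schwarz, and balancing to get $V(t+L)\le\rho V(t)$ --- is the classical self-contained argument (Anderson, Narendra--Annaswamy) and, correctly carried out, proves the stronger conclusion of uniform \emph{exponential} convergence of $\bm\theta(t)-\bm\theta^*$ for $t>t_0$. The price is that the "delicate part" you flag really does need to be written out (the inequality $(a+b)^2\ge\tfrac12 a^2-b^2$ applied to $\bm z(s)^{\mathrm T}\tilde{\bm\theta}(s)$ split at an endpoint of the window, plus the $L^2$ bound on the drift term), whereas the paper trades that work for a citation. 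Both arguments are sound; yours buys a convergence rate, the paper's buys brevity.
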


\begin{proof} 
The fulfillment of the first condition ensures the existence and uniqueness of the system \eqref{syst}, \eqref{theta(t)} solution for any initial conditions.

Let us define the Lyapunov function as $V_t = V(\bm \theta(t)) = Q_t + \frac12 \|\bm \theta(t) - \bm \theta^*\|^2_{\bm \Gamma^{-1}} = Q_t + \frac12 (\bm \theta(t) - \bm \theta^*)^{\mathrm T}\bm \Gamma^{-1}(\bm \theta(t) - \bm \theta^*)$. Its derivative with the respect to the system \eqref{syst}, \eqref{theta(t)}: $\dot V_t = \frac12 \delta^2(t) + (\bm \theta(t) - \bm \theta^*)^{\mathrm T}\bm \Gamma^{-1} \bm{\dot \theta}(t) = \frac12 \delta^2(t) - \delta(t) (\bm \theta(t) - \bm \theta^*)^{\mathrm T} \bm z(t) = -\frac12\delta^2(t) \leqslant 0.$  Therefore, firstly, the equilibrium point of the system \eqref{theta(t)} $\bm \theta^*$ is Lyapunov stable, secondly, $\frac12 \|\bm \theta(t) - \bm \theta^*\|^2_{\bm \Gamma^{-1}} \leqslant V_t \leqslant V_0 = \frac12 \|\bm \theta(0) - \bm \theta^*\|^2_{\bm \Gamma^{-1}}$ and $Q_t \leqslant V_0$, thus, $\bm \theta(t)$ is bounded and the finite limit $\lim\limits_{t\to \infty}\int\limits^t_0 \delta^2(s)ds$ exists. 

The expression $\frac{d}{dt}\delta^2(\bm x(t), \bm \theta(t)) = 2\delta(\bm x(t), \bm \theta(t))\dot\delta(\bm x(t), \bm \theta(t))=2(\bm \theta(t) - \bm \theta^*)\bm z(\bm x(t))(\bm {\dot \theta}(t)\bm z(\bm x(t))+(\bm \theta(t) - \bm \theta^*)\bm {\dot z}(\bm x(t)))$
is bounded because of the second condition and boundedness of $\bm \theta(t)$. Therefore, the function $\delta^2(t)$ is uniformly continuous. 

Thus, by the Barbalat lemma \citep{barbalat}, $\delta^2(t) \to 0$ for $t \to \infty$. It means that the first part of the identification goal \eqref{goal} is achieved: $y(t) - y^*(t) \rightarrow 0 \,\, \mbox{for}\,\, t \rightarrow \infty$.

Let us use a lemma from \citep{chaos} to prove the second identification goal \eqref{goal} achievement. 

\begin{proposition} \label{prA1}
Consider the smooth vector function $\bm w(t) \in R^{m}$ and the PE function $\bm z(t) \in R^{m}$, which is defined on $[0, \infty)$. If $\bm {\dot w}(t) \to 0$ and $\bm w(t)^T\bm z(t) \to 0$ for $t \to \infty$, then
\begin{equation}
    \lim_{t\to\infty}\bm w(t)=0.
\end{equation}
\end{proposition}

$\bm w(t)=(\bm \theta_1(t)-\bm \theta_1^* \quad \dots \quad \bm \theta_5(t)-\bm \theta_5^*)^{\mathrm{T}}$, $\bm z(t) = \bm z(\bm x(t))$ in our case. We have that $\bm w(t)^{\mathrm{T}}\bm z(t) = \delta(t) \to 0$ for $t\to\infty$; $\bm {\dot w}(t) = \bm {\dot\theta}(t) = - \bm \Gamma\delta(t)\bm z(t) \to 0$ for $t\to\infty$ because of the second condition. Thus, Proposition \ref{prA1} conditions are fulfilled and, therefore,  $\bm \theta(t) - \bm \theta^* \to 0$ for $t \to \infty$. 
\end{proof}

\section{Proof of Theorem \ref{th}}\label{A2}

Proof of Theorem \ref{th} is reduced to the check of the first two conditions of Theorem \ref{thA1} for the system \eqref{F}. The first condition is obviously met. Taking into account Remark \ref{r1} and the structure of \eqref{F}, fulfilment of the second condition is followed from boundedness of FHN trajectories.

\begin{proposition}
    If $\sigma < \varepsilon b/r$ then the solution of the system of $\mathrm N$ diffusively coupled FHN models is bounded. 
\end{proposition}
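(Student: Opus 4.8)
The plan is to build a radially unbounded energy function for the coupled dynamics and show that its derivative is negative outside a compact set, which traps every trajectory in a sublevel set and hence yields boundedness. Since the quadratic form $R$ in \eqref{0} is written in the variables $(\bm y, \bm v)$ with equal weights on the two blocks, I would work in the coordinates of system \eqref{F} and take
\begin{equation}
W(\bm y, \bm v) = \frac12\sum_{k=1}^{\mathrm N}\bigl(y_k^2 + v_k^2\bigr);
\end{equation}
boundedness of $y_k = c u_k$ is equivalent to boundedness of the original membrane potentials for fixed $c>0$. The first step is to differentiate $W$ along \eqref{F}, splitting off the coupling terms $Y_k', V_k'$ from the single-neuron terms.

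The single-neuron terms contribute, for each $k$, a quartic $-y_k^4/(3c^2)$ together with $y_k^2$, the linear forcing from $I_{ext}$ and $\theta_5^*$, a bilinear $y_kv_k$ term, and the intrinsic dissipation $-\epsilon b v_k^2$ (using $\theta_1^*-1 = -\epsilon b$ and $1-\theta_1^*-\theta_3^* = \epsilon$ from Remark \ref{r2}). The coupling terms contribute exactly $\sum_k(y_kY_k' + v_kV_k') = \sigma R(\bm y, \bm v)$, which by the defining inequality \eqref{0} is at most $\sigma r\sum_k(y_k^2 + v_k^2)$. Collecting everything,
\begin{equation}
\dot W \leqslant \sum_{k=1}^{\mathrm N}\Bigl[-\tfrac{1}{3c^2}y_k^4 + (1+\sigma r)y_k^2 + (\sigma r - \epsilon b)v_k^2 + (\text{bilinear and linear terms})\Bigr].
\end{equation}

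The hypothesis $\sigma < \epsilon b/r$ enters at exactly one point: it makes $\sigma r - \epsilon b < 0$, so the $v_k^2$ coefficient is strictly negative. The remaining indefinite pieces are then routine to dominate — the bilinear $y_kv_k$ term is split by Young's inequality, directing most of it into the $y$-direction (where the quartic beats any quadratic) and leaving its $v_k^2$-residual below the margin $\epsilon b - \sigma r$, while the linear terms are absorbed by completing the square against the quartic in $y$ and against the negative $v_k^2$ term. Hence $\dot W$ is bounded above and $\dot W \to -\infty$ as $\|(\bm y,\bm v)\|\to\infty$, so $\dot W<0$ outside a ball; the standard Lyapunov boundedness argument then forces every solution into a sublevel set of $W$ and keeps it there, which is the assertion.

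The main obstacle is the indefiniteness introduced by the cross-couplings. The off-diagonal interaction coefficients $B_{uv},B_{vu}$ (through $R$) and the intrinsic $y_kv_k$ coupling make $\dot W$ a coercive quartic plus an \emph{indefinite} quadratic form, so one must verify that this quadratic form is negative in every direction once the quartic is set aside. The quartic dominates all $y$-directions automatically, so the whole burden falls on the $v$-block, and it is precisely $\sigma r < \epsilon b$ — with $r$ taken, via Remark \ref{r4}, as the largest eigenvalue of the matrix of $R$ (the Laplacian spectrum scaled by $B_{uu}$ and $B_{vv}$) — that keeps this block negative after the bilinear term has been absorbed.
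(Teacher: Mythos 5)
Your proposal is correct and follows essentially the same route as the paper: the same energy function $\frac12\sum_k(y_k^2+v_k^2)$, the same separation of the coupling contribution via the bound $\sigma R \leqslant \sigma r\sum_k(y_k^2+v_k^2)$, the same Young-type absorption of the bilinear and linear terms into the quartic in $y$ and the $-\varepsilon b\,v_k^2$ margin, with the hypothesis $\sigma<\varepsilon b/r$ entering exactly where you say it does. The only cosmetic difference is that the paper finishes by integrating the differential inequality $\dot H\leqslant-\beta_1H+\beta_2$ rather than invoking negativity of $\dot W$ outside a ball, but these are equivalent endings.
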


\begin{proof}
Let us define a non-negative function of $\bm y = (y_1 \quad y_2 \quad \ldots \quad y_{\mathrm N})^\mathrm{T}$ and $\bm v = (v_1 \quad v_2 \quad \ldots \quad v_{\mathrm N})^\mathrm{T}$ as follows: 
\begin{equation}
    H(\bm y, \bm v) = \frac12\sum_{k = 1}^{\mathrm N} (y_k^2 + v_k^2). \label{H}
\end{equation}
Its derivative with respect to the system \eqref{F} is as follows: 
\begin{equation}
\begin{split}
    &\dot H(\bm y, \bm v) = \sum_{k = 1}^{\mathrm N}\Bigl(-\alpha_2y_k^4 - \alpha_1 v_k^2 + y_k^2\\& + c_1y_kv_k +c_2y_k  + c_3v_k\Bigl) + \sigma R(\bm y, \bm v),
\end{split} \label{dot_H}
\end{equation}
where 
\begin{equation}
\begin{split}
    &\alpha_2 = -\theta_2^* = c^{-2}/3 > 0,\\
    &\alpha_1 = 1 - \theta_1^* = \varepsilon b > 0,\\
    &c_1 = - (3\theta_2^*(1 - \theta_1^* - \theta_3^*)+1)/\sqrt{-3\theta_2^*},\\
    &c_2 = I_{ext}/\sqrt{-3\theta^*_2},\\
    &c_3 = -\theta_5^*\sqrt{-3\theta_2^*}/\mathrm N.
\end{split}
\end{equation}

Last three terms in parenthesis in \eqref{dot_H} can be estimated by standard quadratic inequalities:
\begin{equation}
\begin{split}
    &c_1y_kv_k \leqslant \frac{c_1^2}{2\delta_1}y_k^2 + \frac{\delta_1}{2}v_k^2,\\
    &c_2y_k \leqslant \frac{\delta_1}{2}y_k^2 + \frac{c_2^2}{2\delta_1},\;c_3v_k \leqslant \frac{\delta_1}{2}v_k^2 + \frac{c_3^2}{2\delta_1},
\end{split} \label{1}
\end{equation}
where $\delta_1$ is an arbitrary positive constant. Similarly for any $\delta_2 > 0$ the following inequality holds:
\begin{equation}
    y_k^2 \leqslant \frac{\delta_2}{2}y_k^4 + \frac1{2\delta_2}. \label{y_k^2}
\end{equation}
Multiplying both sides of \eqref{y_k^2} by $2\alpha_2/\delta_2 > 0$ leads to the following estimate of first term in parenthesis in \eqref{dot_H}:
\begin{equation}
    -\alpha_2y_k^4 \leqslant -\frac{2\alpha_2}{\delta_2}y_k^2 + \frac{\alpha_2}{\delta_2^2}. \label{2}
\end{equation}

From \eqref{0}, \eqref{1}, \eqref{2} it implies that \eqref{dot_H} can be estimated as follows:
\begin{equation}
\begin{split}
    &\dot H(\bm y, \bm v) \leqslant -\beta_1^y\sum_{k = 1}^{\mathrm N}y_k^2 -\beta_1^v\sum_{k = 1}^{\mathrm N}v_k^2 + \beta_2,
\end{split} \label{dot_H_ineq}
\end{equation}
where 
\begin{equation}
\begin{split}
    &\beta_1^y = \frac{2\alpha_2}{\delta_2} - 1 - \frac{c_1^2}{2\delta_1} - \sigma r - \frac{\delta_1}2,\\
    &\beta_1^v = \alpha_1 - \sigma r - \delta_1,\\
    &\beta_2 = \mathrm N\left(\frac{c_2^2 + c_3^2}{2\delta_1} + \frac{\alpha_2}{\delta_2}\right).
\end{split}
\end{equation}
By the Lemma condition $\sigma < \varepsilon b/r = \alpha_1/r$, therefore there are such $\delta_1 > 0$, $\delta_2 > 0$ that $\beta_1^y$ and $\beta_1^v$ are positive. \eqref{dot_H_ineq} can be rewritten as follows:
\begin{equation}
    \dot H(\bm y, \bm v) \leqslant -\beta_1H(\bm y, \bm v) + \beta_2, \label{dot_H_ineq2}
\end{equation}
where $\beta_1 = \mathrm{min}\{\beta_1^y,\,\beta_1^v\} > 0$. Integrating of the inequality \eqref{dot_H_ineq2} leads to the following estimate:
\begin{equation}
    H(\bm y(t), \bm v(t)) \leqslant H(\bm y(0), \bm v(0))\mathrm{e}^{-\beta1} + \frac{\beta_2}{\beta_1}.
\end{equation}

Therefore, $H(\bm y, \bm v)$ is a bounded function, which means that the norm of the vector $(\bm y\quad \bm v)^{\mathrm T}$ is bounded.
\end{proof}




\end{appendices}


\bibliography{sn-bibliography}

\end{document}